\begin{document}
\newcommand{\Amatrix}{\mathbf{A}}
\renewcommand{\Bmatrix}{\mathbf{B}}
\newcommand{\Cmatrix}{\mathbf{C}}
\newcommand{\Dmatrix}{\mathbf{D}}

\renewcommand{\c}{\mathsf{c}}
\renewcommand{\b}{\mathsf{b}}
\renewcommand{\r}{\mathsf{r}}
\newcommand{\n}{\mathsf{n}}

\newcommand{\res}{\mathtt{r}}
\newcommand{\inact}{\mathtt{i}}
\newcommand{\act}{\mathtt{a}}
\newcommand{\actset}{\mathscr{A}}
\newcommand{\rippleset}{\mathscr{R}}
\newcommand{\cloudset}{\mathscr{C}}
\newcommand{\image}{\mathscr{I}}
\newcommand{\comp}{\bar{\mathscr{I}}}

\newcommand{\kres}{K^\mathtt{r}}
\newcommand{\kinact}{K^\mathtt{i}}
\newcommand{\kact}{K^\mathtt{a}}

\newcommand{\x}{\mathtt{x}}

\newcommand{\osymb}{c}

\newcommand{\ripple}[1]{ \msr{R}_{#1}}
\newcommand{\cloud}[1]{\msr{C}_{#1}}

\newcommand{\Ripp}{\mathsf{R}}
\newcommand{\Ru}{\Ripp_u}
\newcommand{\ru}{\mathsf{r}_u}
\newcommand{\Cloud}{\mathsf{C}}
\newcommand{\Cu}{\mathsf{C}_u}

\newcommand{\cu}{\mathsf{c}_u}
\newcommand{\bu}{\mathsf{b}_u}
\newcommand{\Nu}{\mathsf{N}_u}
\renewcommand{\nu}{\mathsf{n}_u}
\renewcommand{\ni}{\mathsf{Q}}

\renewcommand{\S}[1]{\mathsf{S}_{#1}}
\newcommand{\C}[1]{\mathsf{C}_{#1}}
\newcommand{\R}[1]{\Ripp_{#1}}
\newcommand{\N}[1]{\mathsf{N}_{#1}}

\newcommand{\Brv}{\mathsf{B}}
\newcommand{\Erv}{\mathsf{A}}
\newcommand{\erv}{\mathsf{a}}

\newcommand{\Y}{\mathsf{N}}
\newcommand{\y}{\mathsf{n}}

\newcommand{\dmax}{ d_{\max}}

\renewcommand{\deg}{\mathrm{deg}}

\newcommand{\Ninact }{\mathbb{E}[\Y]}

\newcommand{\nb}[1]{\mathcal{N}\left(#1\right)}
\newcommand{\sz}[1]{\left|#1\right|}

\newcommand{\myop}[1]{%
  \mathchoice{\raisebox{8pt}{$\displaystyle #1$}}
             {\raisebox{8pt}{$#1$}}
             {\raisebox{4pt}{$\scriptstyle #1$}}
             {\raisebox{1.6pt}{$\scriptscriptstyle #1$}}}

\newtheorem{mydef}{Definition}
\newtheorem{example}{Example}
\newtheorem{prop}{Proposition}
\newtheorem{algo}{Algorithm}

\begin{acronym}
\acro{LT}{Luby transform}
\acro{BP}{belief propagation}
\acro{LRFC}{linear random fountain code}
\acro{ML}{maximum likelihood}
\acro{GE}{Gaussian elimination}
\acro{RSD}{robust soliton distribution}
\acro{SA}{simulated annealing}
\acro{ID}{Inactivation Decoding}
\acro{RI}{random inactivation}
\acro{MWI}{maximum weight inactivation}
\acro{PMF}{probability mass function}
\end{acronym}

\newcommand{\figw}{0.99\columnwidth}
\newcommand{\fran}{\textcolor{red}}
\newcommand{\gian}{\textcolor{green!50!black}}
\newcommand{\imp}{\textcolor{orange}}
\newenvironment{francisco}{\par\color{red}}{\par}
\newcommand{\Omegambms}{\Omega^{\star}}


\title{ ~ \\ \tiny{ ~ \\} \Huge Inactivation Decoding Analysis for LT Codes}

\author{\IEEEauthorblockN{Francisco L\'azaro, Gianluigi Liva}
\IEEEauthorblockA{Institute of Communications and Navigation\\
DLR (German Aerospace Center), Wessling, Germany\\
Email: \{Francisco.LazaroBlasco,Gianluigi.Liva\}@dlr.de}
\and
\IEEEauthorblockN{Gerhard Bauch}
\IEEEauthorblockA{Institute for Telecommunication\\
Hamburg University of Technology, Hamburg, Germany\\
Email: Bauch@tuhh.de}
\thanks{This work has been accepted for publication at the 53rd Annual Allerton Conference on Communication, Control, and Computing, 2015.}
\thanks{\copyright 2015 IEEE. Personal use of this material is permitted. Permission
from IEEE must be obtained for all other uses, in any current or future media, including
reprinting /republishing this material for advertising or promotional purposes, creating new
collective works, for resale or redistribution to servers or lists, or reuse of any copyrighted
component of this work in other works}
}

 \maketitle


\thispagestyle{empty}


\begin{abstract}
We provide two analytical tools to model the inactivation decoding process of LT codes. First, a model is presented which derives the  expected number of inactivations occurring in the decoding process of an LT code. This analysis is then extended allowing the derivation of the distribution of the number of inactivations. The accuracy of the method is verified by  Monte Carlo simulations. The proposed analysis opens the door to the design of LT codes optimized for inactivation decoding.
\end{abstract}

\thispagestyle{empty}


\setcounter{page}{1}



{\pagestyle{plain} \pagenumbering{arabic}}


\section{Introduction}\label{sec:Intro}

Fountain codes \cite{byers02:fountain} are a class of erasure correcting codes which can potentially generate an unbounded number of encoded symbols. This feature makes them very useful when the erasure probability of the communication channel is not known at the transmitter. Fountain codes are also a very efficient solution for reliable multicast/broadcast transmissions.
The first class of practical fountain codes, \ac{LT} codes,  was introduced in \cite{luby02:LT} together with an efficient \ac{BP} erasure decoding algorithm exploiting a bipartite graph representation of the codes.  \ac{BP} decoding of \ac{LT} codes performs remarkably well for long source blocks but its performance degrades remarkably when applied to moderate and short lengths \cite{Maatouk:2012}. Raptor codes were introduced in \cite{shokrollahi06:raptor} as a modification of \ac{LT} codes. They consist of a serial concatenation of an \ac{LT} code with an outer (fixed-rate) code  that is normally chosen to be a high rate erasure correcting code.

LT and Raptor codes are often designed and analyzed assuming \ac{BP} decoding and very large source blocks. However, frequently in practice moderate source block sizes are used. For example, the Raptor codes standardized in \cite{MBMS12:raptor} and \cite{luby2007rfc} assume a  source block length which ranges from $1024$ to $8192$ source symbols. The performance of LT codes under \ac{BP} decoding in the finite length regime was analyzed in \cite{Karp2004,shokrollahi2009theoryraptor,Maatouk:2012}. For short to moderate-length source blocks an efficient \ac{ML} decoding algorithm exists which has a manageable complexity and it is actually widely used in practice \cite{shokrollahi2005systems}, \cite{MBMS12:raptor}. This algorithm is commonly referred to as \emph{inactivation decoding}. The erasure correcting performance of \ac{LT} codes under \ac{ML} decoding was studied in \cite{schotsch:2013}. In \cite{Lazaro:ITW104}  a simple model of inactivation decoding for LT codes was presented which provides an approximation of the number of inactivations needed for decoding. The method in \cite{Lazaro:ITW104} is not always accurate.   In this paper we present an accurate finite length analysis of LT codes under inactivation decoding following the approach used in \cite{Karp2004} for the analysis of BP decoding. The analysis we present is not only able to determine the average number of inactivations, but also to provide the distribution of the number of inactivations. This is important for short LT codes, since in this regime substantial deviations from the average number of inactivations can be expected.\footnote{{In \cite{Chingbats} a finite length analysis of batched sparse codes was introduced, that can also be applied to \ac{LT} codes. This analysis provides the expected number of inactivations, like the analysis in Section~\ref{sec:model_first} in this paper.}}

The paper is organized as follows. In Section  \ref{sec:inact} we introduce inactivation decoding of LT codes. Section~\ref{sec:model_first} introduces a first order analysis of  inactivation decoding of \ac{LT} codes which is able to provide the expected number of inactivations needed for decoding. In Section~\ref{sec:model_second} we outline an analysis that provides the distribution of the number of inactivations.  Finally we present the conclusions to our work in Section~\ref{sec:Conclusions}.

\section{Inactivation Decoding of LT codes}\label{sec:inact}

 A binary \ac{LT} code is considered with $k$ input symbols $\mathbf{v}=(v_1,~v_2,~\ldots, v_k)$.  The output degree distribution is denoted by $\Omega= \{\Omega_1,~\Omega_2,~\Omega_3,~\hdots~\Omega_{d_{\mathrm{max}}}\}$ , where $d_{\mathrm{max}} \leq k$ is the maximum output degree. We assume the receiver has collected $m=k+\delta$ output symbols, $\mathbf{\osymb}=(\osymb_1, \osymb_2, \ldots, \osymb_m)$. The parameter $\delta$ is usually referred to as absolute receiver overhead. We denote by $\epsilon =  m/k -1$ the relative receiver overhead. Decoding consists of solving the linear system of equations
\begin{equation}
\mathbf{\osymb} =\mathbf{v} \mathbf{G}^T
\end{equation}
where $\mathbf{G}$ is the $m \times k$ binary matrix which defines the relation between the input and the output symbols.
For \ac{LT} codes the matrix $\mathbf{G}$ is usually sparse. This sparsity can be exploited to perform \ac{ML} decoding in an efficient way \cite{studio3:RichardsonEncoding,studio3:fekri04it,miller04:bec,shokrollahi2005systems} through a decoding algorithm that is commonly referred to as \emph{inactivation decoding} and that consists of the following steps:
\begin{figure*}[t]
\centering
        \subfigure[Structure of $\mathbf{G}$ after triangularization.]{
               \includegraphics[width=0.66\columnwidth,draft=false]{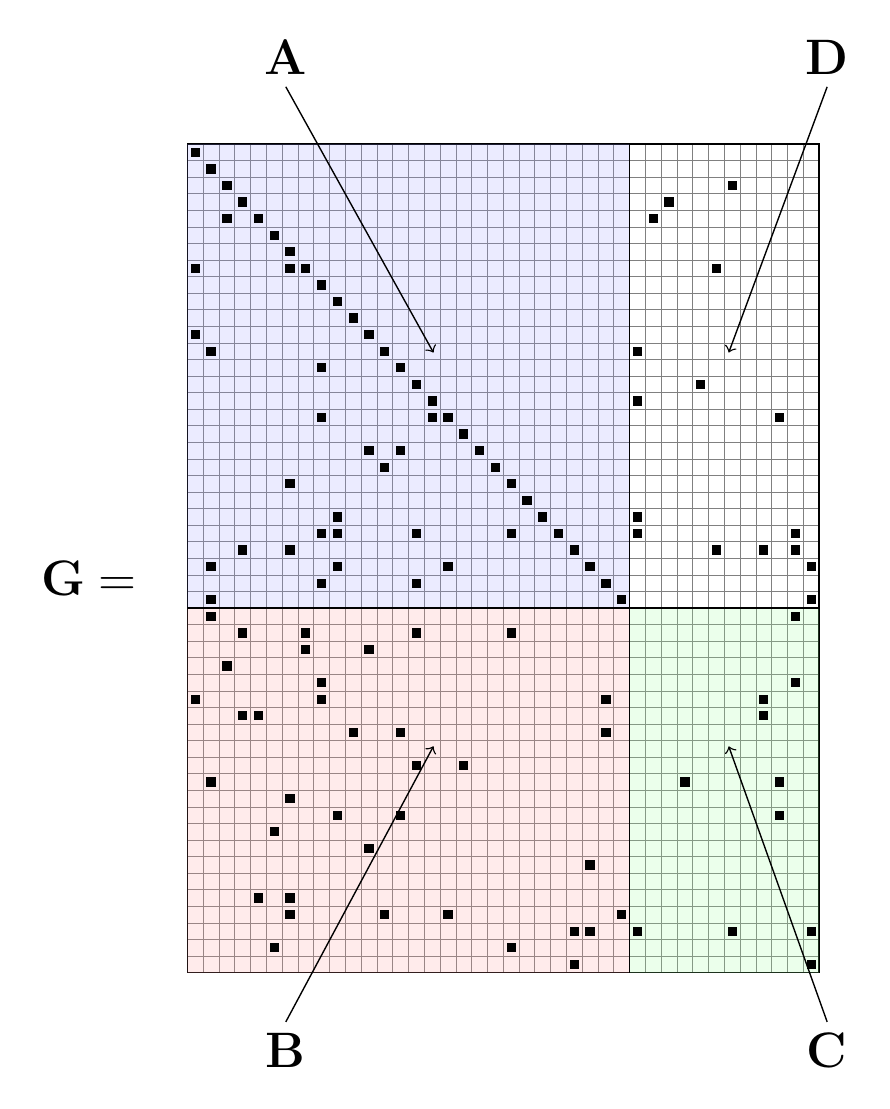}
                \label{fig:piv_a}
                }\hspace{2cm}
      \subfigure[Structure of $\mathbf{G}'$ after the zero matrix procedure.]{
                \includegraphics[width=0.67\columnwidth,draft=false]{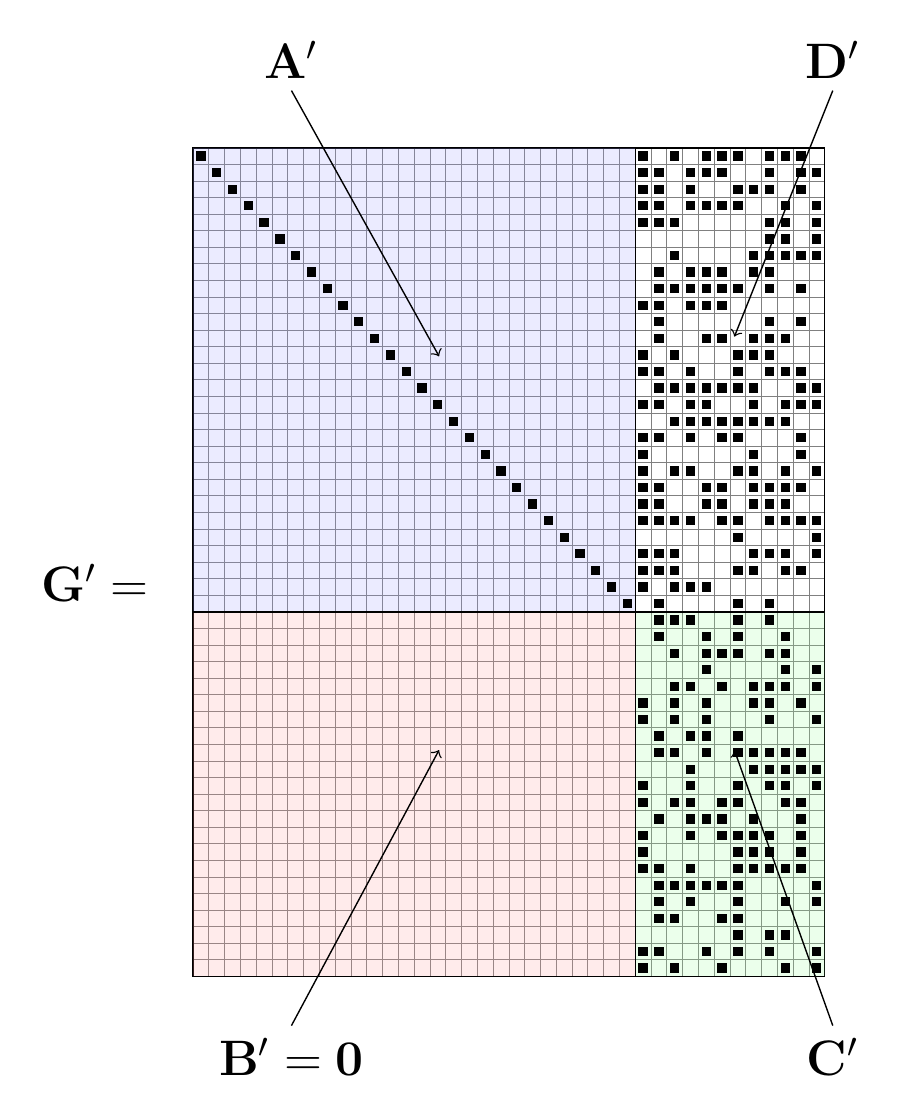}
                \label{fig:piv_b}}
    \caption{Triangulization and zero matrix procedure steps of inactivation decoding.}
\end{figure*}
\begin{enumerate}
  \item {\emph{Triangularization.} $\mathbf{G}^T$ is put in an approximate lower triangular form. At the end of this process we are left with lower triangular matrix $\Amatrix$ and matrices $\Bmatrix$, $\Cmatrix$, and $\Dmatrix$ which are sparse as shown in Fig.~\ref{fig:piv_a}. The columns of $\mathbf{G}$ corresponding to matrices $\Cmatrix$ and $\Dmatrix$ are usually referred to as inactive columns. This process consists of column and row permutations.}
  \item {\emph{Zero matrix procedure.} The matrix $\Amatrix$ is put in a diagonal form and matrix $\Bmatrix$ is zeroed out through row sums. As a consequence matrices $\Cmatrix$ and $\Dmatrix$ tend to become dense. Fig.~\ref{fig:piv_b} shows The structure of $\mathbf{G}'$ at the end of this procedure.}
  \item \emph{\ac{GE}}. \ac{GE} is applied to solve the systems of equations $\tilde{\mathbf{\osymb}} =\tilde{\mathbf{v}} \left[\Cmatrix'\right]^T$, where the symbols in $\tilde{\mathbf{v}}$ are called \emph{inactive variables} (associated with the columns of the matrix $\Cmatrix'$ in Fig.~\ref{fig:piv_b}) and  $\tilde{\mathbf{\osymb}}$ are known terms associated with the rows of the matrix $\Cmatrix'$ in Fig.~\ref{fig:piv_b}. This step drives the cost of inactivation decoding since its complexity is cubic in the number of inactivations.
  \item {\emph{Back-substitution.} Once the values of the inactive variables have been determined, back-substitution is applied to compute the values of the remaining variables in $\mathbf{v}$.}
\end{enumerate}

Decoding succeeds only if the rank of the matrix $\Cmatrix'$ equals the number of inactive variables.

In this paper we focus on the triangularization step since it is the one that determines the number of inactivations. In the remainder of the paper we will use a bipartite graph representation of the \ac{LT} code. At the left hand side of the bipartite graph we will show the source symbols and at the right hand side output symbols. The adjacency matrix of the bipartite graph is given by matrix $\mathbf{G}$, where source symbols correspond to columns and output symbols to rows of $\mathbf{G}$. Due to the one-to-one correspondence between nodes and symbols, we will use both names interchangeably.

The triangularization step can be represented by an iterative pruning of the  bipartite graph of the \ac{LT} code. At each step, a reduced graph is obtained as the sub-graph of the original LT code graph involving only a subset of the input symbols (that we call \emph{active} input symbols) and their neighbors. We will use the term \emph{reduced} degree of a node (symbol) to refer to the  degree of a node (symbol) in the reduced graph. Hence, the reduced degree of a node (symbol) is less or equal to its (original) degree. We will use the notation $\deg(c)= d$ to refer to the (original) degree of an output symbol.
Let us now introduce some additional definitions that will be used to model the triangularization step.

\begin{mydef}[Ripple] We define the ripple as the set of output symbols of reduced degree 1 and we denote it by $\rippleset$.
\end{mydef}
\noindent The cardinality of the ripple will be denoted by $\r$ and the corresponding random variable as $\Ripp$. 
\begin{mydef}[Cloud] We define the cloud as the set of output symbols of reduced degree $d\geq 2$ and we denote it by $\cloudset$.
\end{mydef}
\noindent The cardinality of the cloud will be denoted by $\c$ and the corresponding random variable as $\Cloud$.

\begin{figure}[t!]
\begin{center}
\includegraphics[width=0.75\columnwidth]{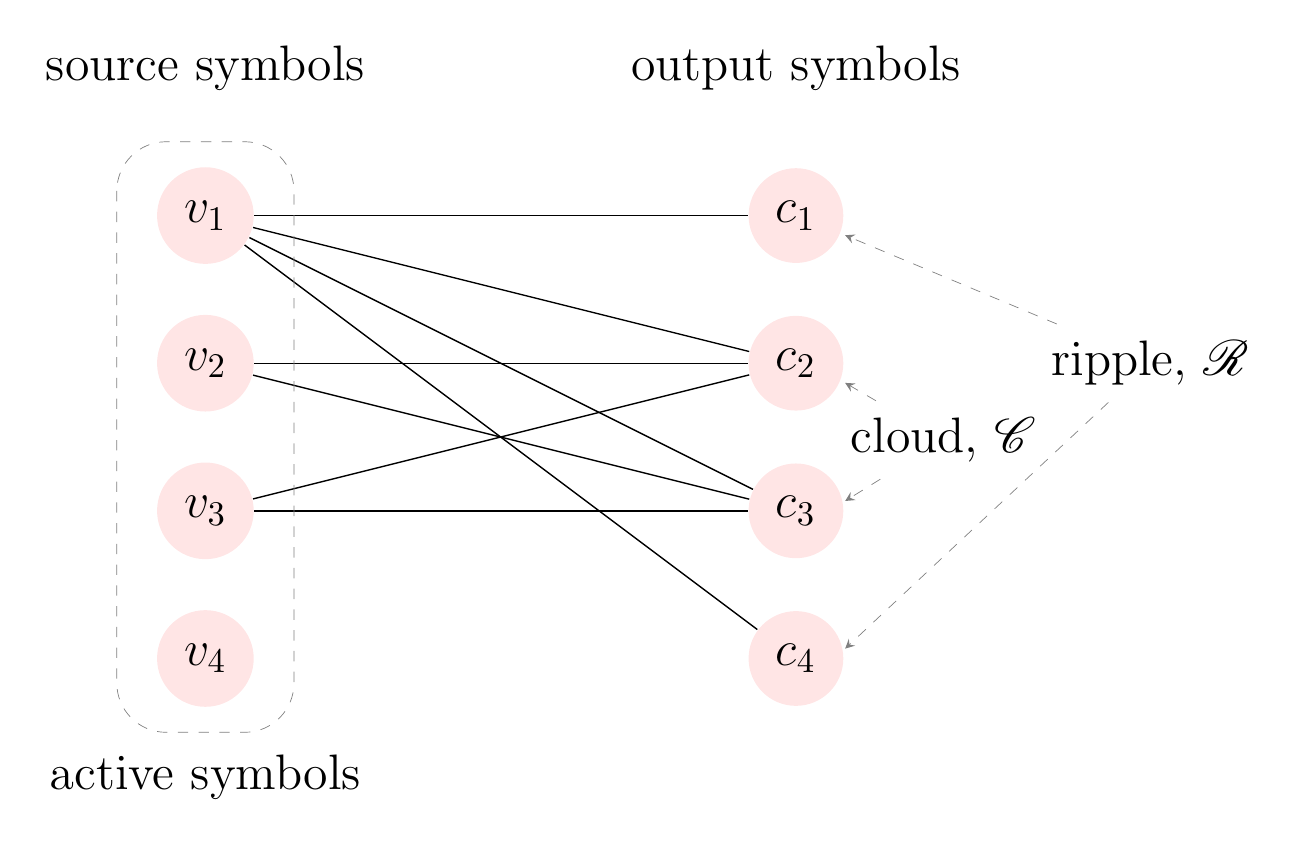}
\centering \caption{Example of bipartite graph for an LT code. Source symbols are represented at the left hand side, output symbols at the right hand side.}
\label{fig:lt_0}
\end{center}
\end{figure}

Figure~\ref{fig:lt_0} shows an example of bipartite graph with for an LT code with $4$ source symbols and $4$ output symbols. In the graph we can see how all $4$ source symbols are active. If we now look at the output symbols we can see how the ripple and the cloud are composed of two elements each,  $\rippleset=\{ c_1, c_4\}$ and $\cloudset=\{c_2, c_3\}$.
Before triangularization starts all source symbols are marked as active. At every step of the process, triangularization marks one active source symbol as \emph{resolvable} or \emph{inactive} and the symbol leaves the reduced graph. Therefore, the reduced graph will initially correspond to the bipartite graph of the \ac{LT} code. After $k$ steps the reduced graph will correspond to an empty graph.
In the following, for all the definitions provided above we will add a temporal dimension through the subscript $u$ that corresponds  to the number of active input symbols in the graph. As we will see next, since at each step of the triangularization procedure the number of active input symbols decreases by $1$, $u$ decreases as the algorithm  progresses. More specifically, the triangularization will start with $u=k$ active symbols and it will end after $k$ steps with $u=0$.
The following algorithm describes the triangularization procedure at step $u$ (i.e.,  in the transition to from $u$  to $u-1$ active symbols):

\begin{algo}[Triangularization with random inactivations]
{~}
\begin{itemize}
 \item {If the ripple $ \ripple{u}$ is not empty $(\ru>0)$}
    \begin{itemize}

        \item[] {The decoder selects an output symbol $\osymb  \in \ripple{u}$ uniformly at random.
        The only  neighbor of $\osymb$, i.e. the input symbol $v$, is marked as resolvable and leaves the reduced graph. The edges attached to $v$ are removed.}
    \end{itemize}
  \item {If the ripple $\ripple{u}$ is empty $(\ru=0)$}
    \begin{itemize}
        \item[] An inactivation takes place. One the input active symbols, $v$, is chosen uniformly at random.\footnote{This is certainly neither the only possible inactivation strategy nor the one leading to the least number or inactivations. However, this strategy makes the analysis trackable. For an overview of the different inactivation strategies we refer the reader to \cite{paolini2012}.} This input symbol is marked as inactive and leaves the reduced graph.  The edges attached to $v$ are removed.
    \end{itemize}
\end{itemize}
\label{alg:triang}
\end{algo}

\begin{figure}[h!]
        \centering
        \subfigure[LT code graph example, $u=4$ \vspace{8mm}]{
               \includegraphics[height=0.36\columnwidth,draft=false]{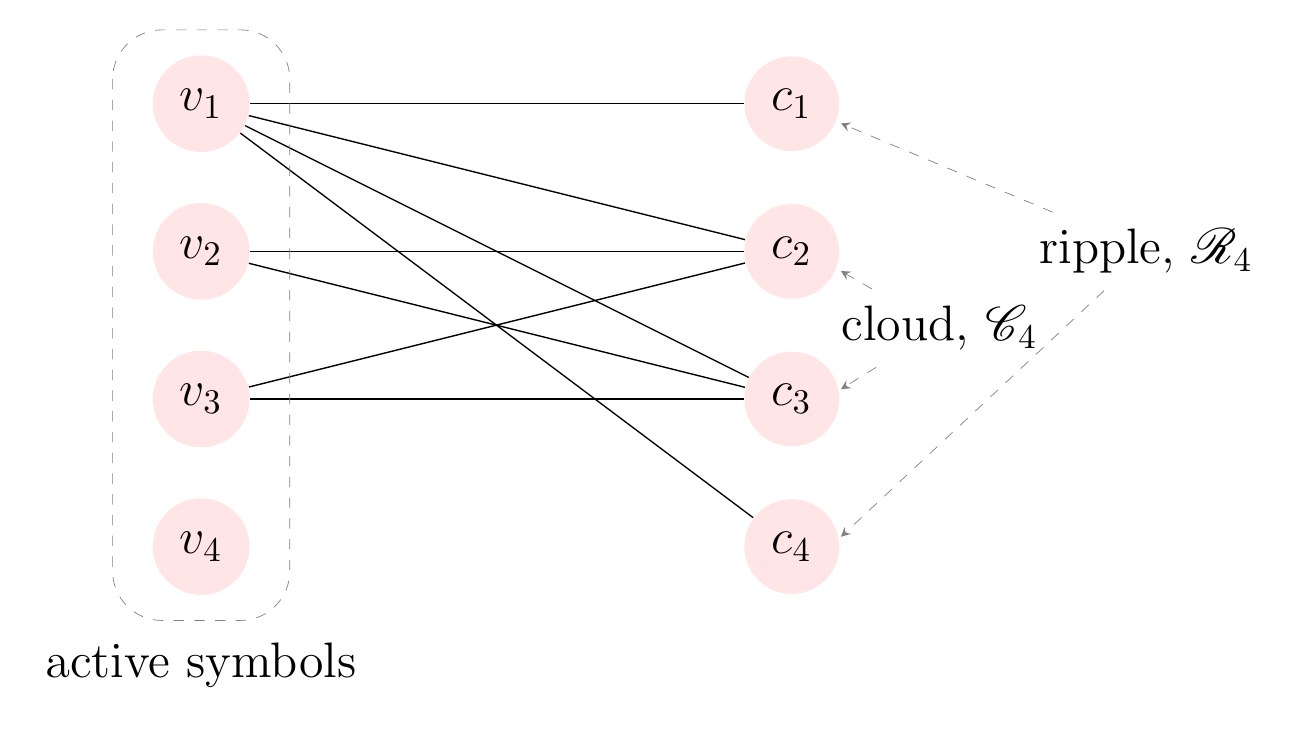}
               \label{fig:example_4}}
        \vspace{8mm}
        \subfigure[LT code graph example, $u=3$]{
               \hspace*{0.21 cm}\includegraphics[height=0.342\columnwidth,draft=false]{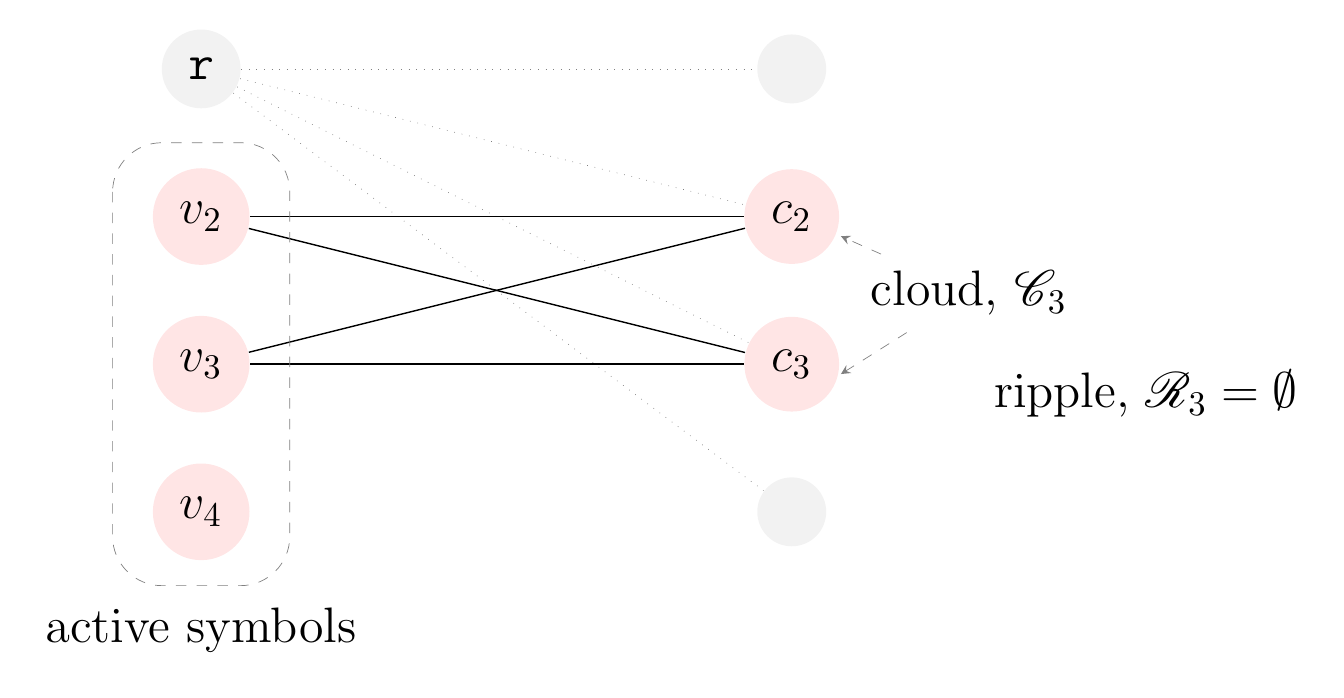}
                \label{fig:example_3}}
        \vspace{8mm}
        \subfigure[LT code graph example, $u=2$]{
               \includegraphics[height=0.35\columnwidth,draft=false]{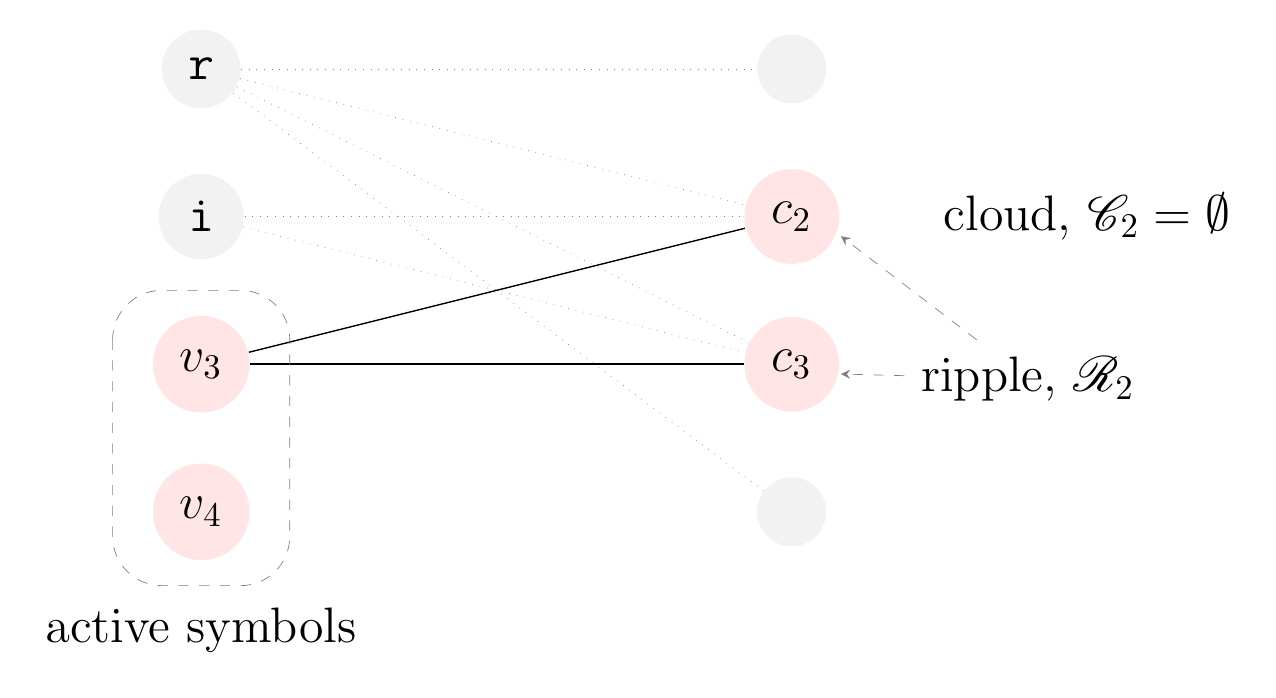}
             \label{fig:example_2}}
        \vspace{8mm}
        \subfigure[LT code graph example, $u=1$]{
               \includegraphics[height=0.35\columnwidth,draft=false]{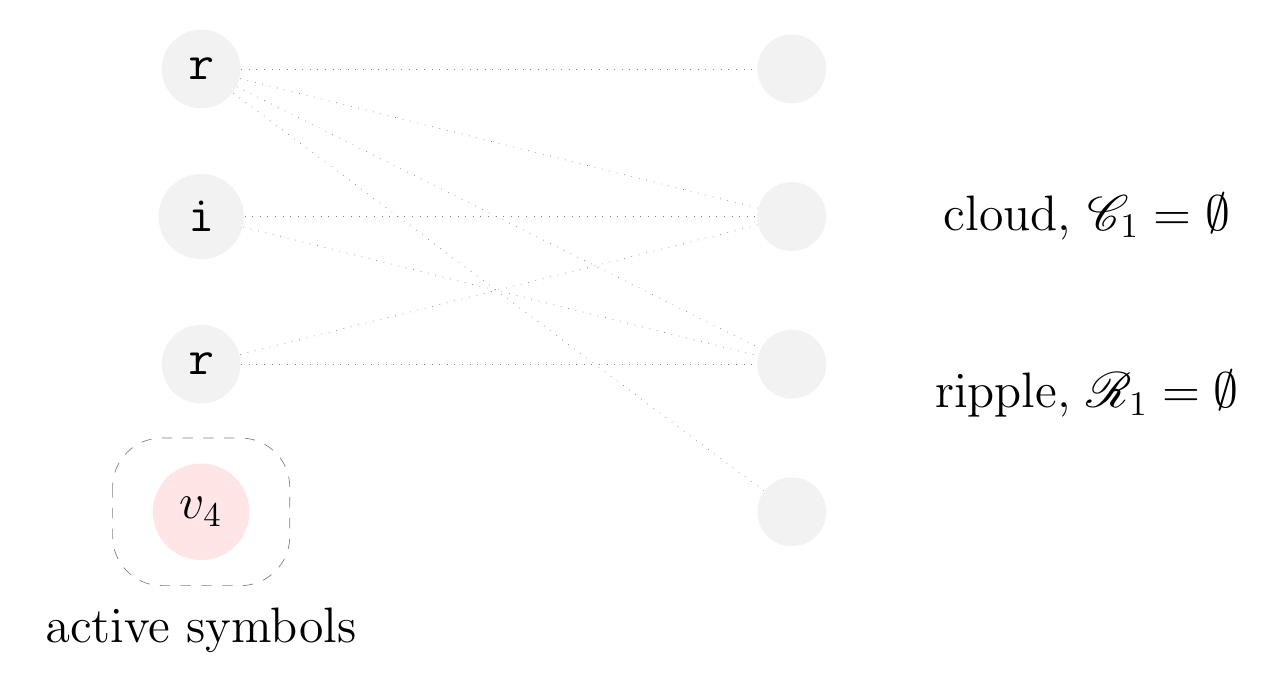}
               \label{fig:example_1}}
        \vspace{8mm}
        \subfigure[LT code graph example, $u=0$]{
               \hspace*{0.23 cm}\includegraphics[height=0.35\columnwidth,draft=false]{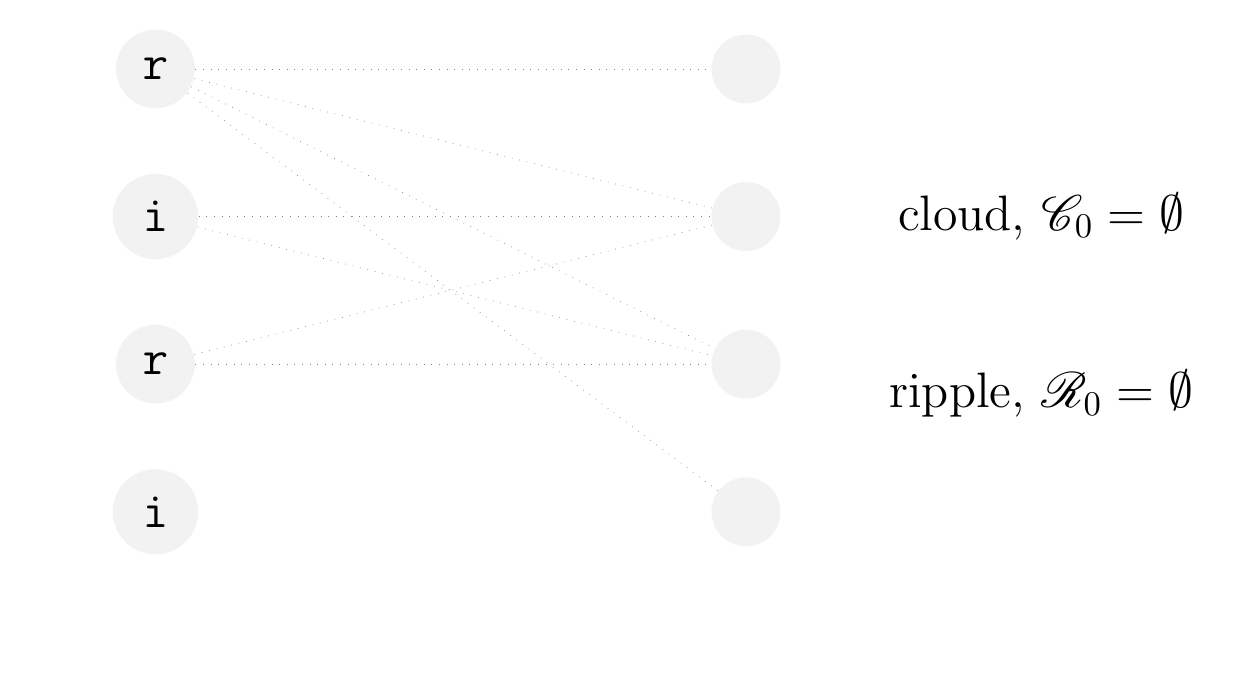}
               \label{fig:example_0}}
        \caption{Example of triangularization process for an LT code.}
        \label{fig:example}
\end{figure}

At the end of the procedure, the source symbols which are marked as resolvable  correspond to the columns of matrices $\Amatrix$ and $\Bmatrix$ in Figure~\ref{fig:piv_a}. Similarly, the source symbols marked as inactive correspond to the columns of matrices $\Cmatrix$ and $\Dmatrix$.

\begin{example}
In order to illustrate Algorithm~\ref{alg:triang} in Figure~\ref{fig:example} we provide an example for an LT code with $k=4$ source symbols and $m=4$  output symbols.
\begin{enumerate}
    \item {Transition from $u=4$ to $u=3$. In Figure~\ref{fig:example_4} we can observe
        how at the initial step $u=4$, the ripple is not empty, $\r_{4}=2$. Hence, in the transition to $u=3$ one of the source symbols ($v_1$) is marked as  resolvable, it leaves the graph and all its attached edges are removed, see Figure~\ref{fig:example_3}. The nodes $c_1$ and $c_4$ leave the graph since their reduced degree becomes zero. }
    \item{Transition from $u=3$ to $u=2$. In Figure~\ref{fig:example_3} we can see how the ripple is empty, $\r_{3}=0$. Therefore, in the transition to $u=2$ an inactivation takes place. Node $v_2$ is chosen at random and becomes inactive. All edges attached to $v_2$ are removed from the graph. As a consequence nodes $c_2$ and $c_3$ leave the cloud $\cloudset_3$ and enter the ripple $\rippleset_2$, as it can be seen in  Figure~\ref{fig:example_2}.}
    \item{Transition from $u=2$ to $u=1$. We can see in Figure~\ref{fig:example_2} how the ripple is not empty, in fact, $\r_{2}=2$. Source symbol $v_3$ is marked as resolvable and all its attached edges are removed. Nodes $c_2$ and $c_3$ leave the graph because their reduced degree becomes zero (see Figure~\ref{fig:example_1}).}
    \item{Transition from $u=1$ to $u=0$. In Figure~\ref{fig:example_1} we can see how the ripple and cloud are now empty. Hence, an inactivation takes place: node $v_4$ is marked as inactive and the triangularization procedure ends.}

\end{enumerate}
\end{example}

Let us remark that the definition of the triangularization procedure given in Algorithm~\ref{alg:triang} has the peculiarity that it never stops before $k$ steps regardless of the properties of the bipartite graph. For example, if some of the input symbols have no edge to any output symbols they are simply marked as inactive at some step, as it happened in our example with node $v_4$.

\section{First Order Analysis for Inactivation Decoding}\label{sec:model_first}

We follow the model introduced in \cite{Karp2004,shokrollahi2009theoryraptor,Maatouk:2012} for \ac{BP} decoding of LT codes. In our work we show how this approach can also be used to model inactivation decoding of \ac{LT} codes. The decoder is modelled as a finite state machine with state
\[
\S{u}:=(\Cu, \Ru ).
\]
 In this section we derive a recursion which allows to obtain $\Pr \{ \S{u-1}=(\c_{u-1}, \r_{u-1}) \}$ as a function of  $\Pr \{ \S{u}=(\cu, \ru )\}$.

We shall first analyze how the ripple and cloud change in the transition from $u$ to $u-1$ active source symbols. Since in the transition exactly one active source symbol is marked as either resolvable or inactive and all its attached edges are removed, the degree of some of the output symbols may be reduced. Consequently, some of the cloud symbols may enter the ripple and some of the ripple symbols may become of reduced degree zero and leave the reduced graph.
Let us first focus on the symbols that leave the cloud in the transition given that $\S{u}=(\c_u, \r_u)$.  Since in an \ac{LT} code the output symbols neighbors are selected uniformly at random, the number of cloud symbol which leave $\cloud{u}$ and enter $\ripple{u-1}$ is binomially distributed with parameters $\c_u$ and $p_u$, being $p_u$ the probability of a symbol leaving $\cloud{u}$ to enter $\ripple{u-1}$,
\begin{align}
p_u :&= \Pr \{ c \in \ripple{u-1} | c \in \cloud{u} \}\\[2mm] &= \frac { \Pr \{ c \in \ripple{u-1}\, , \, c \in \cloud{u} \} }  { \Pr \{ c \in \cloud{u} \}}.
\label{eq:pu_prob}
\end{align}

\medskip

 We are first interesting in evaluating
$  \Pr \{ c \in \ripple{u-1}\, , \, c \in \cloud{u} | \deg(c)= d\} $
   which corresponds to the probability that one of the edges of a degree-$d$ output symbol $c$ connected to the symbol being marked as inactive or resolvable at the transition, one edge to one of the $u-1$ active symbols after the transition and the remaining $d-2$ edges connected to the $k-u$ not active input symbols (inactive or resolvable). In other words, the symbol must have \emph{reduced} degree $2$ \emph{before} the transition and \emph{reduced} degree $1$ \emph{after} the transition.
\begin{prop}
The probability that a symbol $c$ belongs  to the cloud at step $u$  and  enters the ripple at step $u-1$,  given its original degree $d$ is given by
\begin{align}
 \Pr \{ c \in \ripple{u-1}\, , \, & c \in \cloud{u} | \deg(c)= d \} = \nonumber \\
& \mkern-65mu \frac{d}{k} (d-1)\frac{u-1}{k-1}  \frac{\binom{k-u}{d-2}}{\binom{k-2}{d-2}}
\label{eq:z_and_l_d}
\end{align}
for $d\geq 2$, while  $\Pr \{ c \in \ripple{u-1}\, , \, c \in \cloud{u} | \deg(c)= d \} = 0$ for $d<2$.
\end{prop}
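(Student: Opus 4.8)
The plan is to exploit the symmetry of the LT ensemble to reduce the statement to a purely combinatorial count over a uniformly random $d$-subset. Since a degree-$d$ output symbol draws its $d$ neighbors uniformly at random among the $k$ input symbols, conditioned on $\deg(c)=d$ the neighborhood of $c$ is a uniformly random $d$-subset $S\subseteq\{v_1,\dots,v_k\}$. At step $u$ the input symbols split into three groups: the single symbol $v$ that is about to be removed in the transition, the remaining $u-1$ active symbols, and the $k-u$ already non-active (resolvable or inactive) symbols. Because all input symbols are statistically exchangeable and the decoder's choice in Algorithm~\ref{alg:triang} is made uniformly at random, the event probability depends only on the cardinalities of these groups and not on their identities, so I may treat the partition as fixed and $S$ as independent of it.

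Next I would translate the event $\{c\in\cloud{u},\,c\in\ripple{u-1}\}$ into a condition on $S$. Membership in $\cloud{u}$ requires reduced degree at least $2$, i.e. at least two neighbors in the active set, while entering $\ripple{u-1}$ after removing the single symbol $v$ requires the reduced degree to drop to exactly $1$. Removing one edge lowers the reduced degree by at most one, so the reduced degree at step $u$ must be exactly $2$, and one of those two active neighbors must be $v$ itself (otherwise removing $v$ would leave the reduced degree of $c$ unchanged). Hence the event is equivalent to: $v\in S$, exactly one of the other $u-1$ active symbols lies in $S$, and the remaining $d-2$ elements of $S$ lie among the $k-u$ non-active symbols. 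This forces $d\ge 2$; for $d<2$ the symbol can never attain reduced degree $2$, so the probability is $0$, as claimed.

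The counting is then elementary. With the partition fixed, the number of $d$-subsets $S$ meeting the three conditions is $\binom{u-1}{1}\binom{k-u}{d-2}$, so conditioning on $\deg(c)=d$ gives $(u-1)\binom{k-u}{d-2}/\binom{k}{d}$. To recover the factored form of \eqref{eq:z_and_l_d}, I would apply the single identity $\binom{k}{d}=\tfrac{k(k-1)}{d(d-1)}\binom{k-2}{d-2}$, which turns the prefactor $1/\binom{k}{d}$ into $\tfrac{d(d-1)}{k(k-1)}\,\tfrac{1}{\binom{k-2}{d-2}}$ and reproduces exactly $\tfrac{d}{k}(d-1)\tfrac{u-1}{k-1}\tfrac{\binom{k-u}{d-2}}{\binom{k-2}{d-2}}$. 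The same three factors can also be read sequentially—$\tfrac{d}{k}$ for an edge to the removed symbol, $(d-1)\tfrac{u-1}{k-1}$ for one further edge hitting the remaining active symbols, and the last ratio for the $d-2$ edges landing in the non-active set—matching the verbal description preceding the statement.

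The only genuinely delicate point is the first paragraph: justifying that, conditioned on $\deg(c)=d$, the neighborhood of $c$ may be regarded as a uniform $d$-subset that is \emph{independent} of the active/non-active partition produced by the algorithm. I expect this exchangeability argument to be the main obstacle, since the partition is in principle correlated with the graph realization; the resolution rests on the uniform random neighbor selection of the LT ensemble together with the uniformly random choices in Algorithm~\ref{alg:triang}, which render all input symbols interchangeable so that only the group sizes enter the probability. Once this is granted, the remainder is routine counting.
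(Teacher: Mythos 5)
Your proposal is correct and takes essentially the same approach as the paper: both reduce the event to the neighborhood of $c$ being a uniform $d$-subset with one element on the removed symbol, exactly one on the remaining $u-1$ active symbols, and $d-2$ among the $k-u$ non-active ones (equivalently, reduced degree $2$ before and $1$ after the transition). Your one-shot hypergeometric count $(u-1)\binom{k-u}{d-2}/\binom{k}{d}$ followed by the identity $\binom{k}{d}=\frac{k(k-1)}{d(d-1)}\binom{k-2}{d-2}$ is merely a tidier packaging of the paper's sequential factor-by-factor derivation, and your explicit flagging of the exchangeability/independence step is, if anything, more careful than the paper, which leaves it implicit.
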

\begin{proof}
The probability of an edge connecting to the the symbol being marked as inactive or resolvable at the transition is $1/k$, and there are $d$ distinct choices for the edge connected to it. This accounts for the term $d/k$ in \eqref{eq:z_and_l_d}. Moreover, there are  $d-1$ choices for the edge going to the $u-1$ active symbols after the transition, while the probability of an edge being connected to the set of $u-1$ active symbols is $(u-1)/(k-1)$. This is reflected in the term $(d-1)(u-1)/(k-1)$ in \eqref{eq:z_and_l_d}. The probability of having exactly $d-2$ edges going to the $k-u$ not active input symbols is finally
\[
\frac{\binom{k-u}{d-2}}{\binom{k-2}{d-2}}.
\]
\end{proof}
By removing the conditioning on $d$ in \eqref{eq:z_and_l_d} we obtain
\begin{align}
\Pr \{ c \in \ripple{u-1}\, , \,&  c \in \cloud{u} \} =  \\& \mathlarger {\sum}_{d=2}^{\dmax}   \Omega_d \frac{d}{k} (d-1)  \frac{u-1}{k-1}  \frac{\binom{k-u}{d-2}}{\binom{k-2}{d-2}}.
\label{eq:z_and_l}
\end{align}
The denominator of \eqref{eq:pu_prob} is given by the probability that the randomly chosen output symbol $c$ is in the cloud when $u$ input symbols are still active. This is equivalent to the probability of not being in the ripple or having reduced degree zero (all edges are going to symbols marked as inactive or resolvable) as provided by the following Proposition.
\begin{prop}
The probability that the randomly chosen output symbol $c$ is in the cloud when $u$ input symbols are still active is
\begin{align}
\Pr & \{ c \in \cloud{u}\}=\\ &   1 -  \mathlarger{\sum}_{d=1}^{\dmax}  \Omega_d  \left[ u\frac{\binom{k-u}{d-1}}{\binom{k}{d}} + \frac{\binom{k-u}{d}}{\binom{k}{d}}\right].
\label{eq:z}
 \end{align}
\end{prop}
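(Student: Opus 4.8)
The plan is to condition on the original degree $d$ of the randomly chosen output symbol $c$ and then exploit the fact that its three possible behaviours—reduced degree $0$, reduced degree $1$ (ripple), reduced degree $\geq 2$ (cloud)—are mutually exclusive and exhaustive. Since in an \ac{LT} code the $d$ neighbors of a degree-$d$ symbol form a uniformly chosen $d$-subset of the $k$ input symbols, the number of these neighbors that land inside the set of $u$ still-active symbols is hypergeometrically distributed: we draw $d$ symbols out of $k$, of which $u$ are active and $k-u$ are not. I would therefore record, for a degree-$d$ symbol, that exactly $j$ of its edges go to active symbols with probability $\binom{u}{j}\binom{k-u}{d-j}/\binom{k}{d}$.

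Next I would use the observation already stated in the text, namely that a cloud symbol is precisely one that is neither of reduced degree $0$ nor in the ripple $\ripple{u}$. Specializing the hypergeometric mass function, the reduced degree is $0$ exactly when all $d$ edges avoid the active set, which is the $j=0$ term $\binom{k-u}{d}/\binom{k}{d}$, and the reduced degree is $1$, i.e.\ $c \in \ripple{u}$, exactly when a single edge hits the active set, which is the $j=1$ term $u\binom{k-u}{d-1}/\binom{k}{d}$. Hence, conditionally, $\Pr\{c \in \cloud{u} \mid \deg(c)=d\} = 1 - \binom{k-u}{d}/\binom{k}{d} - u\binom{k-u}{d-1}/\binom{k}{d}$.

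Finally I would remove the conditioning by averaging against the output degree distribution $\Omega$, summing over $d$ from $1$ to $\dmax$ with weights $\Omega_d$; this produces exactly \eqref{eq:z}. The usual convention $\binom{n}{r}=0$ for $r>n$ absorbs the degenerate ranges (for instance a degree-$d$ symbol can reach reduced degree $0$ only when $d \le k-u$), so that no separate case analysis is required and the single sum starting at $d=1$ is valid.

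The only point demanding care—rather than a genuine obstacle—is justifying the hypergeometric (sampling-without-replacement) model instead of a binomial one. This is the same subset model implicitly used in the preceding Proposition: using the identity $\binom{k-2}{d-2}=\binom{k}{d}\,d(d-1)/(k(k-1))$, the factorization $\tfrac{d}{k}(d-1)\tfrac{u-1}{k-1}\binom{k-u}{d-2}/\binom{k-2}{d-2}$ in \eqref{eq:z_and_l_d} collapses to the hypergeometric form $(u-1)\binom{k-u}{d-2}/\binom{k}{d}$. Adopting the same model here keeps the two Propositions consistent and reduces the present derivation to a one-line specialization of the hypergeometric mass function to its $j=0$ and $j=1$ terms.
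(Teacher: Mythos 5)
Your proof is correct and follows essentially the same route as the paper: both decompose $\Pr\{c \notin \cloud{u}\}$ into the mutually exclusive events of reduced degree $1$ and reduced degree $0$, condition on the original degree $d$, and compute the two terms under the uniform $d$-subset model before averaging over $\Omega_d$. The only difference is cosmetic—you write the conditional probabilities directly as the $j=0$ and $j=1$ terms of the hypergeometric mass function, whereas the paper obtains the ripple term via sequential edge counting as $d\,\tfrac{u}{k}\,\binom{k-u}{d-1}/\binom{k-1}{d-1}$, which is algebraically identical to your $u\binom{k-u}{d-1}/\binom{k}{d}$.
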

\begin{proof}
The probability of $c$ not being in the cloud is given by is given by the  probability of $c$ being in the ripple or having reduced degree $0$. Being the two events mutually exclusive, we can compute such probability as the sum of two probabilities, the probability of $c$ having reduced degree $1$ (i.e., of $c$ being in the ripple) and the probability of $c$ having reduced degree $0$. Let us focus on the first of the two probabilities.
Assuming the degree of $c$ being $d$, the probability that $c$ has reduced degree $1$ equals the probability of $c$ having one neighbor among the $u$ active source symbols and $d-1$ neighbors among the $k-u$ non-active ones. This is given by
                               \[
                               d \frac{u}{k} \frac{\binom{k-u}{d-1}}{\binom{k-1}{d-1}},
                               \]
                               that corresponds to the first term in \eqref{eq:z}.
The probability of $c$ having reduced degree $0$ is the probability of having all $d$ neighbors in the $k-u$ non-active symbols, leading to the term
\[
\frac{\binom{k-u}{d}}{\binom{k}{d}}
\]
in \eqref{eq:z}.
\end{proof}
The probability $p_u$ can be finally obtained through \eqref{eq:pu_prob} making use of \eqref{eq:z_and_l} and of \eqref{eq:z}.

\medskip

We are interested now in analyzing the number $\erv_u$ of output symbols leaving the ripple during the transition at step $u$. We denote by $\Erv_u$ the random variable associated with $\erv_u$. We distinguish two cases.
In the first case, the ripple is not empty. In this case no inactivation takes place. Hence, an output symbol is chosen at random and removed from the ripple. Any other output symbol in the ripple which is connected to the chosen input symbol leaves the ripple during the transition. Hence, for $\ru>0$ we have
\begin{align}
\Pr\{\Erv_u=\erv_u & |\Ru=\ru\}=\\ &\binom{\ru-1}{\erv_u-1} \left(\frac{1}{u}\right)^{\erv_u-1} \mkern-3mu \left( 1- \frac{1}{u} \right)^{\ru-\erv_u}
\end{align}

\vspace{4pt}

\noindent
with $1\leq \erv_u \leq \ru$.
In the second case, the ripple is empty ($\ru=0$). Since no output symbols can leave the ripple, we have
$
\Pr\{\Erv_u=0|\Ru=0\} = 1$.
Now we are in the position to derive the transition probability
\[
\Pr\{\S{u-1}=(\c_{u-1},\r_{u-1})|\S{u}=(\c_{u},\r_{u})\}.
\]
Introducing $\b_u:=\cu-\c_{u-1}$ and observing that $\erv_u-\bu=\ru-\r_{u-1}$ we have
\begin{align}
\Pr\{\S{u-1}&=(\cu-\bu,\ru-\erv_u+\bu) | \S{u}=(\cu,\ru)\} = \nonumber \\[2mm]
& \binom{\cu}{\bu} {p_u}^{\bu} (1-p_u)^{\cu-\bu} \binom{\ru-1}{\erv_u-1} \,\,\times \nonumber \\[2mm]
&\left(\frac{1}{u}\right)^{\erv_u-1} \left( 1- \frac{1}{u} \right)^{\ru-\erv_u}
\label{eq:prob_transition}
\end{align}

\vspace{4pt}

\noindent
for $\ru>0$, while
\begin{align}
\Pr\{\S{u-1}&=(\cu-\bu,\bu) | \S{u}=(\cu,0)\} = \nonumber \\
& \binom{\cu}{\bu} {p_u}^{\bu} (1-p_u)^{\cu-\bu}.
\label{eq:prob_transition_r_0}
\end{align}

\vspace{4pt}

\noindent
Therefore, the probability of the decoder being in state $\S{u-1} =(\c_{u-1},\r_{u-1})$ can be computed recursively via \eqref{eq:prob_transition}, \eqref{eq:prob_transition_r_0} with the initial condition
\begin{equation}
 \Pr\{\S{k} =(\c_{k},\r_{k}) \} = \binom{m}{\r_{k}} \Omega_1^{\r_{k}} \left( 1-\Omega_1\right)^{\c_{k}}
\end{equation}
for all non-negative $\c_{k},\r_{k}$ such that $\c_{k}+\r_{k} = m$
where $m$ is the number of  output symbols.
Denoting by $\Y$ the random variable modelling the cumulative number of inactivations after $k$ steps, its expected value is finally given by
\begin{equation}
\mathsf{E}\left[\Y\right]= \sum_{u=1}^{k} \sum_{ \cu }  \Pr\{\S{u} =(\cu,0) \}. \label{eq:avg_inact}
\end{equation}
Figure ~\ref{fig:mbms_k_1000} shows the expected number of inactivations for a $k=1000$ LT code with output degree distribution \cite{MBMS12:raptor,luby2007rfc}
\begin{align}
\Omega(\x):&=\sum_d \Omega_d \x^d \\
&= 0.0098\x + 0.4590\x^2+ 0.2110\x^3+0.1134\x^4+ \\
&\quad  0.1113\x^{10} + 0.0799\x^{11} + 0.0156\x^{40}.
\end{align}
The chart reports the average number of inactivations obtained through  both Monte Carlo simulation and by \eqref{eq:avg_inact}, and shows a tight match between the analysis and the simulation results.

\begin{figure}[t!]
\begin{center}
\includegraphics[width=\figw]{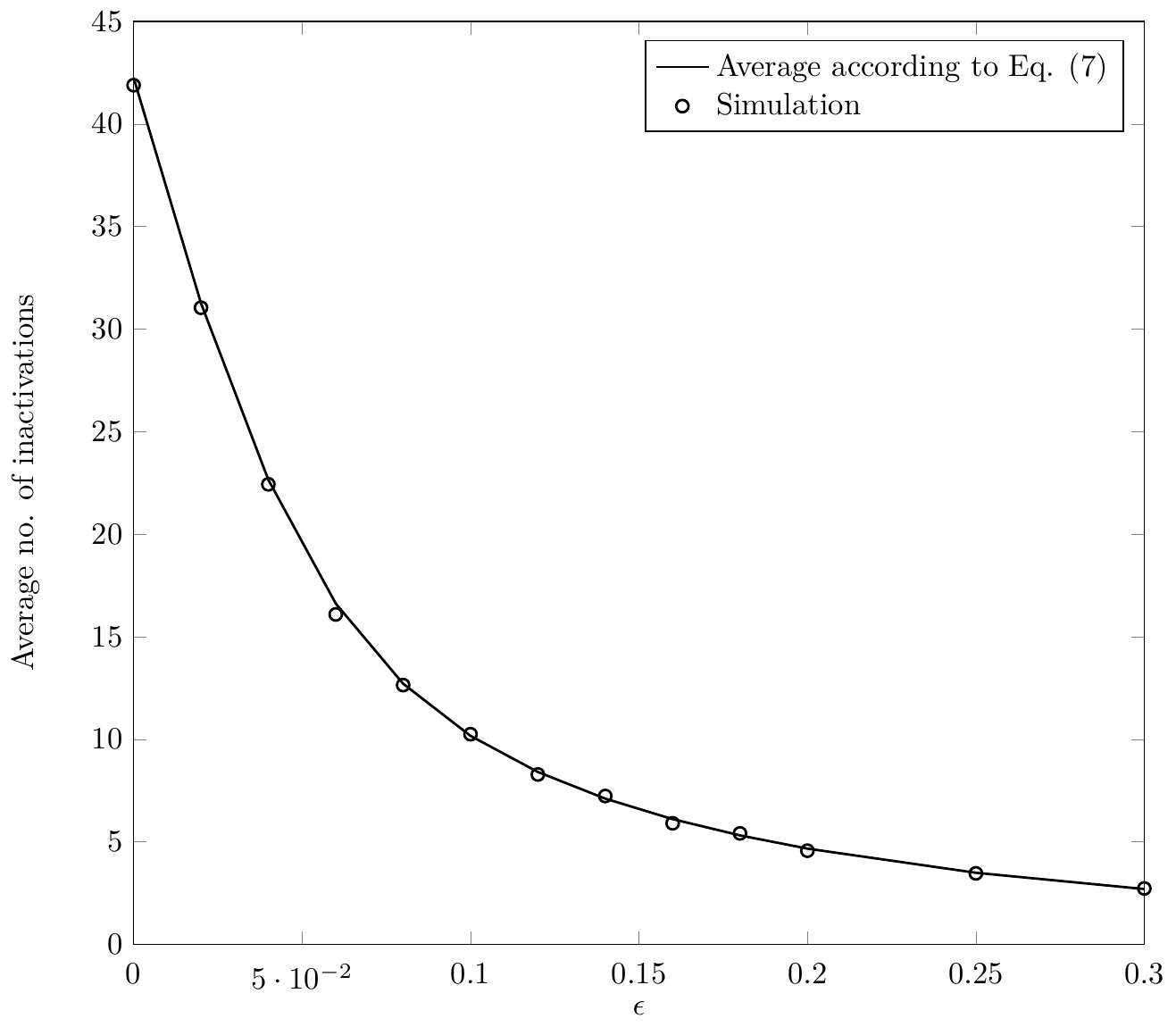}
\centering \caption{Average number of inactivations vs.  relative overhead for an \ac{LT} code with  degree distribution $\Omega(\x)
= 0.0098\x + 0.4590\x^2+ 0.2110\x^3+0.1134\x^4+ 0.1113\x^{10} + 0.0799\x^{11} + 0.0156\x^{40}$. The source block size is $k=1000$.}
\label{fig:mbms_k_1000}
\end{center}
\end{figure}

\section{Distribution of the Number of Inactivations}\label{sec:model_second}
The analysis presented in Section~\ref{sec:model_first} is able to capture the expected number of inactivations. We shall see next that the model can be easily modified to obtain the distribution of the number of inactivations.
To do so, we first need to extend the finite state machine by including in the state definition the number of inactive input symbols, i.e.,
\[
\S{u}=(\Cu, \Ru, \Nu )
\]
 with $\Nu$ being the random variable modelling the number of inactivations at step $u$. Again, we proceed by deriving a recursion which allows deriving $\Pr\{ \S{u-1}=(\c_{u-1}, \r_{u-1}, \n_{u-1} )\}$ as a function of $\Pr\{ \S{u}=(\cu, \ru, \nu )\}$.
We will look first at the transition from $u$ to $u-1$ active symbols when $\r_{u}\geq1$, that is, when no inactivation takes place. In this case the number of inactivations does not increase and we have $\n_{u-1} = \nu$. We have thus
\begin{align}
\Pr\{ \S{u-1}=(\c_{u}-\b_{u},& \r_{u}-\erv_{u}+\b_{u}, \nu) | \S{u}=(\c_u,\r_u,\nu)\} = \\[-3mm]
& \binom{\c_u}{\b_u} {p_u}^{b_u} (1-p_u)^{\c_u-\b_u}\,\,\times \\[2mm]
& \binom{\r_u-1}{\erv_u-1} \left(\frac{1}{u}\right)^{\erv_u-1} \left( 1- \frac{1}{u} \right)^{\r_u-\erv_u}.
\label{eq:prob_transition_full}
\end{align}
Let us now look at the transition from $u$ to $u-1$ active symbols when $\r{u}=0$, that is, when an inactivation takes place. In this case the number of inactivations increases by one yielding
\begin{align}
\Pr\{ \S{u-1}=(\c_{u}-& \b_{u},\b_{u}, \nu+1) | \S{u}=(\c_u,0,\nu)\} =  \\[3mm]
& \binom{\c_u}{\b_u} {p_u}^{\b_u} (1-p_u)^{\c_u-\b_u}.
\label{eq:prob_transition_r_0_full}
\end{align}

\vspace{6pt}

\noindent
The probability of the decoder being in state $\S{u-1} =(\c_{u-1},\r_{u-1}, \n_{u-1})$ can be computed recursively via \eqref{eq:prob_transition_full}, \eqref{eq:prob_transition_r_0_full} with the initial condition
\begin{equation}
 \Pr\{\S{k}=(\c_k,\r_k, \n_u) \} = \binom{m}{\r} \Omega_1^\r \left( 1-\Omega_1\right)^{\c_k} \end{equation}
for all non-negative $\c_k, \r_k$ such that $\c_k+\r_k=m$ and $\n_k=0$.

The distribution of the number of inactivations needed to complete the decoding process if finally given by
\begin{equation}
f_{\Y}(\y) = \sum_{ \c_0} \sum_{ \r_0} \Pr\{\S{0} =(\c_0,\r_0, \y) \}.\label{eq:distribution}
\end{equation}
From \eqref{eq:distribution} we may obtain the cumulative distribution $F_{\Y}(\y)$ which would give the probability of performing at most $\y$ inactivations during the decoding process. The cumulative distribution of the number of inactivations has practical implications. Assume the fountain decoder runs on a platform with limited computational capability. For example, the decoder may be able to perform a maximum number of inactivations (e.g., due to the complexity associated with the third step of the algorithm outlined in Section II, which grows cubically with the number of inactivations). Suppose the maximum number of inactivations that the decoder can handle is $\y^*$. For such a decoder, the probability of decoding failure will be lower bounded by $F_{\Y}(\y^*)$.\footnote{The probability of decoding failure is actually higher than $F_{\Y}(\y^*)$ since the system of equations to be solved in the \acf{GE} step of inactivation decoding might be rank deficient.}

Fig.~\ref{fig:mbms_dist_inact} shows the distribution of the number of inactions, for an \ac{LT} code with degree $\Omega(\x)$ from Section III and source block size $k = 300$. The distribution of inactivations has been obtained through  both Monte Carlo simulation and by \eqref{eq:distribution}, showing again a tight match between the analysis and the simulation results.

\begin{figure}[t!]
\begin{center}
\includegraphics[width=\figw]{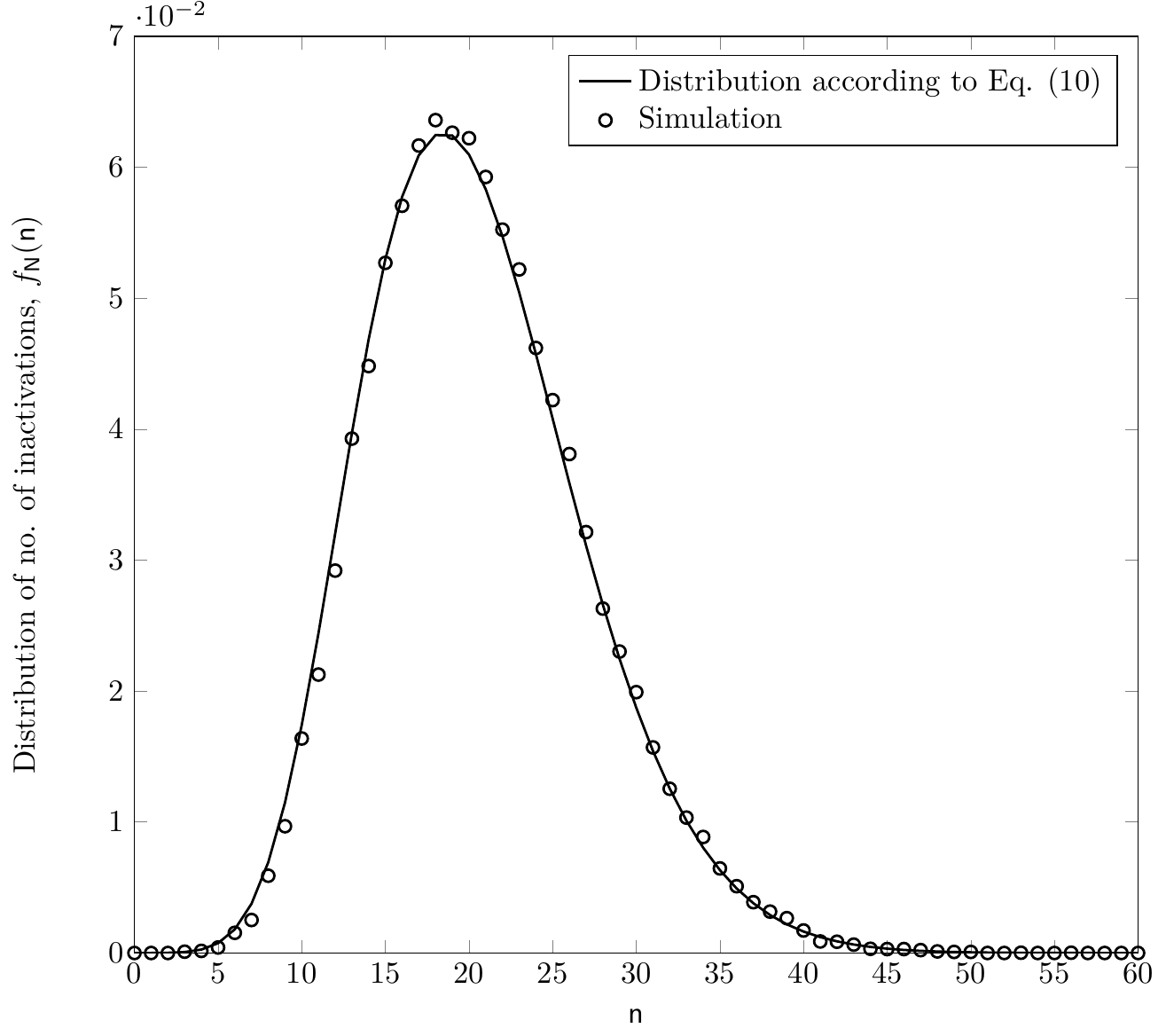}
\centering \caption{Distribution of the number of inactivations for an \ac{LT} code with  degree distribution $\Omega(\x)
= 0.0098\x + 0.4600\x^2+ 0.2110\x^3+0.1134\x^4+
  0.1110\x^{10} + 0.0800\x^{11} + 0.0156\x^{40}$. The source block size is $k=300$.}
\label{fig:mbms_dist_inact}
\end{center}
\end{figure}

\section{Conclusions and Discussion}\label{sec:Conclusions}
We have introduced a novel analysis of inactivation decoding of LT codes. A first analysis provides the expected number of inactivations needed to by an LT decoder. Furthermore, we have presented an extended analytical approach which is able to provide the distribution of the number of inactivations. The later analysis is especially important for the design of LT codes under inactivation decoding since it captures the deviations of the number of inactivations with respect to the average.

\bibliographystyle{IEEEtran}

\end{document}